\documentclass{IEEEtran}
\usepackage{graphicx}
\usepackage{mathptmx}
\usepackage{amsmath}
\usepackage{amssymb}
\usepackage{amsthm}
\usepackage{lipsum,color}

\usepackage{booktabs,caption}
\usepackage[flushleft]{threeparttable}
\theoremstyle{plain}
\newtheorem{theorem}{Theorem}[section]
\newtheorem{lemma}[theorem]{Lemma}

\newtheorem{corollary}{Corollary}[section]

\theoremstyle{definition}

\newtheorem{example}{Example}[section]
\newtheorem{problem}{Problem}[section]

\usepackage{cite}

\usepackage{textcomp}
\def\BibTeX{{\rm B\kern-.05em{\sc i\kern-.025em b}\kern-.08em
    T\kern-.1667em\lower.7ex\hbox{E}\kern-.125emX}}
\begin{document}
\title{{Galois Hull Dimensions of Gabidulin Codes}
\author{Habibul Islam and Anna-Lena Horlemann}

\thanks{H. Islam and A-L. Horlemann are with School of Computer Science, University of St. Gallen, Switzerland,  (e-mail: habibul.islam@unisg.ch, anna-lena.horlemann@unisg.ch). }}

\maketitle

\begin{abstract}
For a prime power $q$, an integer $m$ and $0\leq e\leq m-1$ we study the $e$-Galois hull dimension of Gabidulin codes $G_k(\boldsymbol{\alpha})$ of length $m$ and dimension $k$ over $\mathbb{F}_{q^m}$. Using a self-dual basis $\boldsymbol{\alpha}$ of $\mathbb{F}_{q^m}$ over $\mathbb{F}_q$, we first explicitly compute the hull dimension of $G_k(\boldsymbol{\alpha})$. Then a necessary and sufficient condition of $G_k(\boldsymbol{\alpha})$ to be linear complementary dual (LCD), self-orthogonal and self-dual will be provided. We prove the existence of $e$-Galois (where $e=\frac{m}{2}$) self-dual Gabidulin codes of length $m$ for even $q$, which is in contrast to the known fact that Euclidean self-dual Gabidulin codes do not exist for even $q$. 
As an application, we construct two classes of entangled-assisted quantum error-correcting codes (EAQECCs) whose parameters have more flexibility compared to known codes in this context.
\end{abstract}

\begin{IEEEkeywords}
Gabidulin code, Galois inner product, self-orthogonal code, LCD code, entangled-assisted quantum error-correcting code
\end{IEEEkeywords}

\section{Introduction}
For a linear code $C$, the hull is defined by $\mathrm{hull}(C)=C\cap C^{\perp}$, where $C^{\perp}$ is the dual code of $C$. The hull investigation is important for many reasons, for instance, codes having $\mathrm{hull}(C)=\{0\}$ are said to be linear complementary dual (LCD) codes, introduced by Massey \cite{Massey92}, and can be applied for preventing popular cryptographic attacks such as side channel and fault-injection attacks \cite{Carlet16}. Codes satisfying $\mathrm{hull}(C)=C$ (resp. $C^{\perp}$) are known as self-orthogonal (resp. dual-containing), in particular, codes having $\mathrm{hull}(C)=C=C^{\perp}$ are called self-dual. These codes have some interesting applications in secret-sharing schemes and quantum coding, see \cite{Dougherty08,Jin10,Jin17}. Besides, the notion of hull dimension can be successfully applied for constructing quantum error-correcting codes \cite{Gao21,Liu19,Luo19,Sok22a}, and determining the computational complexity of algorithms for finding automorphism groups \cite{Leon82}, or checking the equivalence (permutation) of two linear codes \cite{Sendrier00,Leon91}. Several studies on the hull dimension under popular inner products (e.g., Euclidean, Hermitian, etc.) can be found in the literature. The $e$-Galois inner product (which generalizes both the Euclidean and the Hermitian) over $\mathbb{F}_{q^m}$ where $0\leq e\leq m-1$ has been introduced by Fan and Zhang in 2017 \cite{Fan17}. This inner product can also be applied for constructing quantum codes. However, so far there are only few studies on the $e$-Galois hull dimension (of generalized Reed-Solomon codes) and their application in quantum codes \cite{Cao22,Cao20,Fang22}. 
 
A special class of maximum distance separable (MDS) codes, called Gabidulin codes was introduced in \cite{Gabidulin85}. Detailed surveys of Gabidulin codes and their (Euclidean/Hermitian) duality can be found in \cite{Cruz18,Cruz21,Gorla21,Horlemann21,Ravagnani}.
Here, we investigate the $e$-Galois hull dimension of Gabidulin codes, in particular, by using a self-dual basis of $\mathbb{F}_{q^m}$ over $\mathbb{F}_q$, we obtain a necessary and sufficient condition for these codes to be LCD, self-orthogonal and self-dual. Unlike Euclidean self-dual Gabidulin codes \cite{Nebe16}, for $e=m/2$ we show that $e$-Galois self-dual Gabidulin codes exist for even $q$. Furthermore, we can construct MDS codes in $\mathbb{F}_{q^m}^m$ (whose $e$-Galois hull dimensions are calculated explicitly) of dimension $k>\lfloor \frac{p^e+m}{p^e+1}\rfloor$, where $p$ is the characteristic of the field, which was proposed as an open problem in \cite{Cao22,Cao20,Fang22}. At the end, using the $e$-Galois hull dimensions we obtain 
some classes of entangled-assisted quantum error-correcting codes (EAQECCs), among which some have dimensions $k>\lfloor \frac{p^e+m}{p^e+1}\rfloor$, and are hence not included in the codes that appeared in \cite{Cao22,Cao20,Fang22}.

\section{Preliminaries}	
Let $q$ be a prime power and $\mathbb{F}_{q^m}$ be a field of order $q^m$. Let the elements of $\boldsymbol{\alpha}=(\alpha_1,\cdots, \alpha_m)$ be a basis of $\mathbb{F}_{q^m}$ over $\mathbb{F}_q$, and denote by $\boldsymbol{\alpha}^q :=(\alpha_1^q,\cdots, \alpha_m^q)$ the coordinate-wise application of the Frobenius map. The \emph{Gabidulin code}  $G_k(\boldsymbol{\alpha})$ of length $m$ and dimension $k$ is defined by the generator matrix
\begin{equation}\label{eq1}
    G=\begin{pmatrix}
    \alpha\\
    \alpha^{q}\\
        \alpha^{q^2}\\
\vdots\\
    \alpha^{q^{k-1}}
\end{pmatrix},
\end{equation}
see \cite{Gabidulin85}. The Gabidulin code $G_k(\boldsymbol{\alpha})$ can also be defined as an evaluation code of linearized polynomials with evaluation points $\{\alpha_1,\cdots, \alpha_m\}$. It is a maximum rank distance (MRD) code, and hence also an MDS code, i.e., it has minimum Hamming distance $d=m-k+1$. For more details on Gabidulin codes see \cite{Gabidulin85,Lunardona18,Sheekey16}.\\
Let $0\leq e\leq m-1$ and $\boldsymbol{x}=(x_0, x_1, \dots, x_{m-1})$, $\boldsymbol{y}=(y_0, y_1, \dots, y_{m-1})$ $\in \mathbb{F}_{q^m}^m$, we recall from \cite{Fan17} that the \textit{$e$-Galois inner product} is defined by $$\boldsymbol{x}\cdot_e \boldsymbol{y}: = \sum_{i=0} ^{m-1} x_i y_i^{q^e}.$$ For $e=0$, it is the Euclidean and for $e=\frac{m}{2}$ (when $m$ is even), it is equivalent to the Hermitian inner product. \\
For a Gabidulin code $G_k(\boldsymbol{\alpha})$ of length $m$ over $\mathbb{F}_{q^m}$, the \emph{$e$-Galois dual} is defined by
\begin{align*}
    G_k(\boldsymbol{\alpha})^{\perp_e} :=\{\boldsymbol{x}\in \mathbb{F}_{q^m}^m : \boldsymbol{x}\cdot_e \boldsymbol{y}=0 \text{ for all }\boldsymbol{y}\in G_k(\boldsymbol{\alpha})\}.
\end{align*}
Further, $G_k(\boldsymbol{\alpha})$ is said to be self-orthogonal (resp. self-dual) if $G_k(\boldsymbol{\alpha})\subseteq G_k(\boldsymbol{\alpha})^{\perp_e}$ (resp. $G_k(\boldsymbol{\alpha})= G_k(\boldsymbol{\alpha})^{\perp_e}$).

Let $\boldsymbol{\alpha}=\{\alpha_1,\cdots, \alpha_m\}$ and $\boldsymbol{\beta}=\{\beta_1, \cdots,\beta_m\}$ be two bases of $\mathbb{F}_{q^m}$ over $\mathbb{F}_q$. Then $\beta$ is said to be a \emph{dual (orthogonal) basis} of $\boldsymbol{\alpha}$ if $Tr(\alpha_i\beta_j)=\delta_{ij}$, where $Tr: \mathbb{F}_{q^m}\longrightarrow \mathbb{F}_q$ is the usual trace function. When $\boldsymbol{\alpha}=\boldsymbol{\beta}$, we call $\boldsymbol{\alpha}$ a \emph{self-dual basis}.

Self-dual bases do not exists generally, the necessary and sufficient condition for their existence is given by the next result.
\begin{lemma} \cite{Zhe-Xian Wan}\label{lemma exist}
The vector space $\mathbb{F}_{q^m}$ over $\mathbb{F}_q$ has a self-dual basis if and only if $q$ is even or $q$ and $m$ both are odd.
\end{lemma}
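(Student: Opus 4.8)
The plan is to recast the question in terms of the symmetric $\mathbb{F}_q$-bilinear form $T(x,y):=\mathrm{Tr}(xy)$ on the $m$-dimensional $\mathbb{F}_q$-vector space $\mathbb{F}_{q^m}$. This form is nondegenerate, since $\mathrm{Tr}$ is a nonzero $\mathbb{F}_q$-linear functional and multiplication is a perfect pairing, and by the very definition of a dual basis a self-dual basis of $\mathbb{F}_{q^m}/\mathbb{F}_q$ is precisely an orthonormal basis for $T$. So the task reduces to deciding for which $q,m$ the form $T$ is isometric to the standard diagonal form $\langle 1,\dots,1\rangle$ on $\mathbb{F}_q^m$.

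For odd $q$ I would use the classification of symmetric bilinear forms over a finite field of odd characteristic: every nondegenerate form is diagonalizable, and since a nondegenerate binary form over $\mathbb{F}_q$ represents the value $1$, one gets $\langle a_1,\dots,a_m\rangle\cong\langle 1,\dots,1,a_1\cdots a_m\rangle$; hence $T\cong\langle 1,\dots,1\rangle$ if and only if $\det T$ is a square in $\mathbb{F}_q^{\times}$. To pin down the square class of $\det T$ I would take a generator $\theta$ of $\mathbb{F}_{q^m}$ over $\mathbb{F}_q$ and work in the basis $1,\theta,\dots,\theta^{m-1}$: the Gram matrix of $T$ then factors as $W^{\top}W$ with $W=(\theta^{q^{k} i})_{0\le k,i\le m-1}$ a Vandermonde matrix, so $\det T=(\det W)^2$ and $\det W=\pm\prod_{k<l}(\theta^{q^l}-\theta^{q^k})\neq 0$. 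Applying the Frobenius $x\mapsto x^q$ cyclically permutes the rows of $W$, whence $(\det W)^q=(-1)^{m-1}\det W$; thus if $m$ is odd then $\det W\in\mathbb{F}_q$ and $\det T$ is a square, while if $m$ is even then $(\det W)^{q-1}=-1$, so $(\det T)^{(q-1)/2}=-1$ and $\det T$ is a non-square. This gives existence exactly when $m$ is odd in the odd-$q$ case.

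For even $q$ the discriminant argument collapses, so I would instead use characteristic-$2$ structure theory. First, $T(x,x)=\mathrm{Tr}(x^2)=\mathrm{Tr}(x)^2$ is not identically zero since the trace is onto, so $T$ is a non-alternating symmetric bilinear form. Then I would invoke (or prove by induction on $m$) the fact that over $\mathbb{F}_q$ with $q$ even every nondegenerate non-alternating symmetric bilinear form of dimension $m$ is isometric to $\langle 1,\dots,1\rangle$; the only non-formal ingredient is the catalytic isometry $\langle 1\rangle\perp\mathbb{H}\cong\langle 1,1,1\rangle$, where $\mathbb{H}$ is the hyperbolic plane, which lets one trade hyperbolic summands for diagonal ones once a single $\langle 1\rangle$ has been split off. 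Consequently $T\cong\langle 1,\dots,1\rangle$ and a self-dual basis exists for every $m$. Combining the two cases yields the stated equivalence.

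I expect the characteristic-$2$ case to be the main obstacle: there the convenient invariant (the discriminant mod squares) is useless, and one must argue instead with the alternating/non-alternating dichotomy and the isometry $\langle 1\rangle\perp\mathbb{H}\cong\langle 1,1,1\rangle$; checking that $T$ is never alternating is immediate, but the reduction to the standard form takes a little care.
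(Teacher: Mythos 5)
The paper does not prove this lemma at all --- it is quoted from Wan's lecture notes \cite{Zhe-Xian Wan} as a known existence criterion --- so there is no in-paper argument to compare against; what you have written is a self-contained proof, and it is correct. Your reduction is the standard one: a self-dual basis is exactly an orthonormal basis for the nondegenerate trace form $T(x,y)=\mathrm{Tr}(xy)$. In odd characteristic your discriminant computation is clean and checks out: with $W=(\theta^{q^k i})$ one has the Gram matrix $W^{\top}W$, Frobenius cyclically permutes the rows of $W$, so $(\det W)^q=(-1)^{m-1}\det W$, giving $\det T=(\det W)^2$ a square precisely when $m$ is odd; combined with the classification of nondegenerate symmetric forms over $\mathbb{F}_q$ ($q$ odd) by dimension and discriminant, this settles that case in both directions. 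In characteristic $2$ the identity $\mathrm{Tr}(x^2)=\mathrm{Tr}(x)^2$ and surjectivity of the trace correctly show $T$ is non-alternating, and the catalytic isometry $\langle 1\rangle\perp\mathbb{H}\cong\langle 1,1,1\rangle$ (which one verifies directly with the basis $e+f$, $e+g$, $e+f+g$) is indeed the only non-formal ingredient needed to push the induction through. The one place where you lean on an unproved classical fact is the structure theorem that over $\mathbb{F}_q$ with $q$ even every nondegenerate non-alternating symmetric bilinear form is isometric to $\langle 1,\dots,1\rangle$; your sketch of its proof (split off a vector of norm $1$ after rescaling by a square root, which exists since squaring is bijective in characteristic $2$, then trade hyperbolic summands in the orthogonal complement) is the right argument, so this is a matter of level of detail rather than a gap. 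Compared with simply citing the literature as the paper does, your route has the advantage of exhibiting the obstruction explicitly (the square class of the discriminant of the trace form) and of making clear why the even-$q$ case behaves differently.
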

We recall from \cite[page 22-24]{Gabidulin21} that for a Gabidulin code $G_k(\boldsymbol{\alpha})$ of length $m$ whose generator matrix is given by \eqref{eq1}, there is a unique basis $\boldsymbol{\beta}=(\beta_1,\beta_2,\cdots, \beta_m)$ such that the transpose of the Moore matrix $$B=\begin{pmatrix}
    \boldsymbol{\beta}\\
    \boldsymbol{\beta}^{q}\\
\vdots\\
    \boldsymbol{\beta}^{q^{m-1}}
\end{pmatrix}$$ is the inverse to the matrix $G$. Therefore, $GB^\top=B^\top G=I$. Note that $B^\top G=I$ is the matrix representation of the original definition of dual basis given above, i.e., $\beta$ is the dual basis to $\alpha$. This dual basis defines the parity check matrix (i.e., the generator matrix of the Euclidean dual code) of $G_k(\boldsymbol{\alpha})$ as 
\begin{equation}\label{eq2}
    H=\begin{pmatrix}
    \boldsymbol{\beta}^{q^{k}}\\
    \boldsymbol{\beta}^{q^{k+1}}\\
\vdots\\
    \boldsymbol{\beta}^{q^{m-1}}
\end{pmatrix}.
\end{equation}
Using the matrix $H$ we now compute the generator matrix for $G_k(\boldsymbol{\alpha})^{\perp_e}$ in the following.

\begin{theorem}
Let $G_k(\boldsymbol{\alpha})$ be a Gabidulin code of length $m$ and dimension $k$. Let $\boldsymbol{\beta}$ be the dual basis to $\boldsymbol{\alpha}$. Then the generator matrix of $G_k(\boldsymbol{\alpha})^{\perp_e}$ is
$$G^{\perp_e}=\begin{pmatrix}
    \beta^{q^{k-e}}\\
    \beta^{q^{k-e+1}}\\
\vdots\\
    \beta^{q^{m-e-1}}
\end{pmatrix}
=H^{q^{m-e}}
,$$
where $H$ is as in \eqref{eq2}.
\end{theorem}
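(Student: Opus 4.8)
The plan is to start from the known Euclidean parity-check matrix $H$ of $G_k(\boldsymbol{\alpha})$ given in \eqref{eq2}, and then convert the Euclidean dual into the $e$-Galois dual by applying a suitable power of the Frobenius map to every row. The key observation is the general relation between the two duals: for any linear code $C \subseteq \mathbb{F}_{q^m}^m$ one has $C^{\perp_e} = (C^{\perp_0})^{q^{m-e}}$, where the exponent is taken coordinate-wise. Indeed, $\boldsymbol{x} \cdot_e \boldsymbol{y} = \sum_i x_i y_i^{q^e} = \big(\sum_i x_i^{q^{m-e}} y_i\big)^{q^e}$ after raising to the $q^{m-e}$ power (using that $a \mapsto a^{q^m}$ is the identity on $\mathbb{F}_{q^m}$), so $\boldsymbol{x} \cdot_e \boldsymbol{y} = 0$ if and only if $\boldsymbol{x}^{q^{m-e}} \cdot_0 \boldsymbol{y} = 0$. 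Hence $\boldsymbol{x} \in C^{\perp_e}$ iff $\boldsymbol{x}^{q^{m-e}} \in C^{\perp_0}$, i.e. $C^{\perp_e} = \{ \boldsymbol{z}^{q^e} : \boldsymbol{z} \in C^{\perp_0}\}$; writing $z^{q^e} = (z^{q^{m-e}})^{q^m} \cdot \ldots$, one checks this set equals $(C^{\perp_0})^{q^{m-e}}$ as claimed (the coordinate-wise Frobenius is an $\mathbb{F}_q$-linear bijection preserving the code structure).

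Applying this with $C = G_k(\boldsymbol{\alpha})$: a generator matrix of $C^{\perp_e}$ is obtained by raising every entry of $H$ to the power $q^{m-e}$, which immediately gives $G^{\perp_e} = H^{q^{m-e}}$. It then remains to identify the rows explicitly. The $i$-th row of $H$ is $\boldsymbol{\beta}^{q^{k+i}}$ for $i = 0, \ldots, m-1-k$, so the $i$-th row of $H^{q^{m-e}}$ is $\boldsymbol{\beta}^{q^{k+i} \cdot q^{m-e}} = \boldsymbol{\beta}^{q^{k+i+m-e}} = \boldsymbol{\beta}^{q^{k+i-e}}$, where the last equality uses $x^{q^m} = x$ on $\mathbb{F}_{q^m}$ to reduce the exponent of $q$ modulo $m$. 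As $i$ ranges over $0, \ldots, m-1-k$, the exponent $k+i-e$ ranges over $k-e, k-e+1, \ldots, m-1-e$, which is exactly the list of rows in the claimed matrix $G^{\perp_e}$. This also shows $G^{\perp_e}$ has full rank $m-k$ (being obtained from the full-rank $H$ by an invertible coordinate transformation), so it is indeed a generator matrix.

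The main point requiring care — and the place where I expect the only real subtlety — is the bookkeeping of the Frobenius exponents: the rows of $\boldsymbol{\beta}$ in $B$ run from $\boldsymbol{\beta}^{q^0}$ up to $\boldsymbol{\beta}^{q^{m-1}}$, and the reduction $q^{k+i+m-e} \equiv q^{k+i-e}$ modulo the relation $x^{q^m}=x$ must be applied consistently; in particular when $k-e < 0$ the exponents should be read cyclically, but the displayed matrix is still the correct generator matrix since these are the same $\mathbb{F}_q$-linear functionals. A clean way to avoid any ambiguity is to phrase everything via the identity $C^{\perp_e} = (C^{\perp_0})^{q^{m-e}}$ proved above and only at the last step translate back to the Moore-matrix description; I would present the argument in that order.
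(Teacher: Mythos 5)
Your overall strategy --- reduce the $e$-Galois dual to the Euclidean dual by a coordinatewise Frobenius and then read off the rows of $H$ --- is sound, and your final bookkeeping (row $\boldsymbol{\beta}^{q^{k+i}}$ of $H$ becoming $\boldsymbol{\beta}^{q^{k+i-e}}$ after raising to the $q^{m-e}$, with rank preserved) is correct. The problem is the sentence ``one checks this set equals $(C^{\perp_0})^{q^{m-e}}$.'' What your computation actually establishes is $\boldsymbol{x}\in C^{\perp_e}\iff \boldsymbol{x}^{q^{m-e}}\in C^{\perp_0}$, i.e.\ $C^{\perp_e}=\{\boldsymbol{z}^{q^e}:\boldsymbol{z}\in C^{\perp_0}\}$, and that set is by definition $(C^{\perp_0})^{q^{e}}$ --- exponent $q^{e}$, not $q^{m-e}$. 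These are genuinely different codes unless $2e\equiv 0\pmod m$: here $(C^{\perp_0})^{q^{e}}$ is spanned by $\boldsymbol{\beta}^{q^{k+e}},\dots,\boldsymbol{\beta}^{q^{m-1+e}}$ while $(C^{\perp_0})^{q^{m-e}}$ is spanned by $\boldsymbol{\beta}^{q^{k-e}},\dots,\boldsymbol{\beta}^{q^{m-1-e}}$, two distinct $(m-k)$-dimensional spans of consecutive Moore rows whenever $e\neq 0,\tfrac{m}{2}$. The fragment ``$z^{q^e}=(z^{q^{m-e}})^{q^m}\cdot\ldots$'' does not bridge this, so the one step you wave through is exactly where the content lies, and as written it is false: carried out consistently, your derivation yields $G^{\perp_e}=H^{q^{e}}$, contradicting the claimed $H^{q^{m-e}}$.

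The sign flip comes from the asymmetry of $\cdot_e$. Your reduction puts the Frobenius twist on the first argument $\boldsymbol{x}$ (the would-be dual vector), which is faithful to the paper's displayed definition of $\perp_e$ but produces $H^{q^{e}}$. The formula $H^{q^{m-e}}$ in the statement corresponds to the opposite convention, requiring $\boldsymbol{c}\cdot_e\boldsymbol{x}=0$ for all codewords $\boldsymbol{c}$; that is precisely what the paper's own (much shorter) proof verifies, by computing $G\cdot_e\big(G^{\perp_e}\big)^\top=G\cdot\big((H^{q^{m-e}})^{q^e}\big)^\top=GH^\top=0$ and then noting that $G^{\perp_e}$ generates the Gabidulin code $G_{m-k}(\boldsymbol{\beta}^{q^{k-e}})$ and hence has full rank $m-k=\dim C^{\perp_e}$. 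To repair your argument, either redo the reduction with the arguments in that order --- which gives $\boldsymbol{x}\in C^{\perp_e}\iff\boldsymbol{x}^{q^{e}}\in C^{\perp_0}\iff\boldsymbol{x}\in(C^{\perp_0})^{q^{m-e}}$ in one line --- or keep the displayed definition, accept that the answer is then $H^{q^{e}}$, and flag the resulting mismatch with the stated theorem explicitly rather than forcing the exponents to agree.
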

\begin{proof}
We have 
$$G \cdot_e \big(G^{\perp_e}\big)^\top = G \cdot ((H^{q^{m-e}})^{q^e})^\top = G \cdot H^\top = 0.$$
Since $G^{\perp_e}$ is the generator matrix of the Gabidulin code $G_{m-k}(\boldsymbol{\beta}^{q^{k-e}})$, it has full rank and the statement follows.
\end{proof}

\section{Hull dimension of $G_k(\boldsymbol{\alpha})$}
Recall that the $e$-Galois hull of $G_k(\boldsymbol{\alpha})$ is defined by $\mathrm{Hull}_e\big(G_k(\boldsymbol{\alpha})\big)=G_k(\boldsymbol{\alpha})\cap G_k(\boldsymbol{\alpha})^{\perp_e}.$ Clearly, $\mathrm{Hull}_e\big(G_k(\boldsymbol{\alpha})\big)$ is a subspace of both $G_k(\boldsymbol{\alpha})$ and $G_k(\boldsymbol{\alpha})^{\perp_e}$. Therefore, $G_k(\boldsymbol{\alpha})$ is LCD if and only if $\dim \Big( \mathrm{Hull}_e\big(G_k(\boldsymbol{\alpha})\big)\Big)=0$, and self-orthogonal if and only if $\dim \Big( \mathrm{Hull}_e\big(G_k(\boldsymbol{\alpha})\big)\Big)=\dim \Big(G_k(\boldsymbol{\alpha})\Big)=k$. In the following we compute the hull dimension of Gabidulin codes.

\begin{theorem}\label{theorem hull1}
Let $\alpha$ be a self-dual basis of $\mathbb{F}_{q^m}$ over $\mathbb{F}_q$, and $G_k(\alpha)$ be a Gabidulin code of length $m$ and dimension $k$. Then
\begin{equation*}
        \dim \Big( \mathrm{Hull}_e\big(G_k(\boldsymbol{\alpha})\big)\Big)=\begin{cases}
      \min(m-k,e)  & \text{if $0\leq e\leq k$}\\
      \min(m-e,k) & \text{if $k+1\leq e\leq m-1$}
    \end{cases}.  
    \end{equation*}
\end{theorem}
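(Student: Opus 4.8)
The plan is to compute the dimension of $\mathrm{Hull}_e\big(G_k(\boldsymbol{\alpha})\big)=G_k(\boldsymbol{\alpha})\cap G_k(\boldsymbol{\alpha})^{\perp_e}$ by writing both codes in terms of Moore matrices and intersecting the corresponding row spaces. By the generator matrix \eqref{eq1}, $G_k(\boldsymbol{\alpha})$ is spanned by the rows $\boldsymbol{\alpha}^{q^i}$ for $0\le i\le k-1$. Since $\boldsymbol{\alpha}$ is self-dual, its dual basis $\boldsymbol{\beta}$ equals $\boldsymbol{\alpha}$, so by the previous theorem $G_k(\boldsymbol{\alpha})^{\perp_e}$ is spanned by $\boldsymbol{\alpha}^{q^j}$ for $k-e\le j\le m-e-1$ (indices read modulo $m$, using that $\boldsymbol{\alpha}^{q^m}=\boldsymbol{\alpha}$). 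Both codes are thus row spaces of submatrices of the full Moore matrix $M=(\boldsymbol{\alpha}^{q^i})_{0\le i\le m-1}$, which is invertible; hence a set of rows $\{\boldsymbol{\alpha}^{q^i}: i\in S\}$ is linearly independent for any $S\subseteq\{0,\dots,m-1\}$, and the intersection of two such row spaces indexed by $S_1,S_2$ has dimension exactly $|S_1\cap S_2|$.

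With this reduction the computation becomes combinatorial: I need $|\{0,\dots,k-1\}\cap (\{k-e,\dots,m-e-1\}\bmod m)|$. I would split into the two regimes of the statement. If $0\le e\le k$, the second index set, reduced mod $m$, is the union of the block $\{k-e,\dots,k-1\}$ (size $e$, sitting inside $\{0,\dots,k-1\}$ once we also use $e\le m-k$ versus $e>m-k$ to see whether it wraps) and a block disjoint from $\{0,\dots,k-1\}$; a short case split on whether $m-e-1\ge k$ (equivalently $e\le m-k-1$) shows the overlap is $\min(m-k,e)$. If $k+1\le e\le m-1$, then $k-e<0$, so the second set wraps around: it is $\{m+k-e,\dots,m-1\}\cup\{0,\dots,m-e-1\}$, and intersecting with $\{0,\dots,k-1\}$ gives $\min(k,\, m-e)$ after checking whether $m-e-1\ge k-1$. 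Throughout, I would just track endpoints of integer intervals; no field arithmetic is needed beyond the invertibility of the Moore matrix.

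The main obstacle is bookkeeping the index arithmetic modulo $m$ correctly, especially the boundary cases ($e=k$, $e=m-k$, $m-e-1$ crossing $k-1$) where the two interval blocks meet or where one block degenerates; getting the $\min$ to come out right requires care that the two sub-blocks of the dual index set do not themselves overlap and that I have not double-counted the wrap-around point. A secondary point to justify cleanly is the claim that the $e$-Galois dual generator from the prior theorem, namely $\boldsymbol{\alpha}^{q^{k-e}},\dots,\boldsymbol{\alpha}^{q^{m-e-1}}$, indeed has row index set equal (as a subset of $\mathbb{Z}/m\mathbb{Z}$) to what I claimed after substituting $\boldsymbol{\beta}=\boldsymbol{\alpha}$ and reducing exponents mod $m$ — this is where self-duality of $\boldsymbol{\alpha}$ is used essentially, and it is what makes both codes live inside the span of powers of the \emph{same} vector $\boldsymbol{\alpha}$, which is the crux of the whole argument.
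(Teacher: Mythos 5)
Your proposal is correct and follows essentially the same route as the paper: both arguments hinge on the observation that, since $\boldsymbol{\beta}=\boldsymbol{\alpha}$, the code and its $e$-Galois dual are spanned by rows of the same invertible Moore matrix, so everything reduces to counting indices in $\mathbb{Z}/m\mathbb{Z}$. The only (immaterial) difference is that you count $|S_1\cap S_2|$ directly, whereas the paper computes $\mathrm{rk}\begin{pmatrix} G\\ G^{\perp_e}\end{pmatrix}=|S_1\cup S_2|$ and subtracts from $m$; the two counts are equivalent via $|S_1|+|S_2|=k+(m-k)=m$.
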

\begin{proof}
Let $0\leq e\leq k$. Thus $q^{k-e}\geq {q^0}$ and hence
\begin{align*}
    \mathrm{rk} \begin{pmatrix} G \\\hline G^{\perp_e} \end{pmatrix} =\mathrm{rk} \begin{pmatrix}
    \alpha\\
    \alpha^{q}\\
    \vdots\\
    \alpha^{q^{k-1}}\\\hline\\
        \alpha^{q^{k-e}}\\
    \alpha^{q^{k-e+1}}\\
\vdots\\
        \alpha^{q^{m-e-1}}
\end{pmatrix} 
&=
    \mathrm{rk}\begin{pmatrix}
    \alpha\\
    \alpha^{q}\\
\vdots\\
        \alpha^{q^{\max(k-1,m-e-1)}}
\end{pmatrix}\\
&
=\max(k,m-e).
\end{align*}
Therefore, 
$$\dim \Big( \mathrm{Hull}_e\big(G_k(\boldsymbol{\alpha})\big)\Big)=m-\max(k,m-e)=\min(m-k,e).$$

On the other hand, let $e\geq k+1 $. So $q^{k-e}< {q^0}$ and hence 
\begin{align*}
    \mathrm{rk} \begin{pmatrix} G \\\hline G^{\perp_e} \end{pmatrix} 
=
    \mathrm{rk}\begin{pmatrix}
    \alpha^{q^{m-(e-k)}}\\
    \alpha^{q^{m-(e-k)+1}}\\
\vdots\\
\alpha\\
\alpha^q\\
\vdots\\
        \alpha^{q^{\max(k-1,m-e-1)}}
\end{pmatrix}
=\max(e,m-k).
\end{align*}
Thus, 
$$\dim \Big( \mathrm{Hull}_e\big(G_k(\boldsymbol{\alpha})\big)\Big)=m-\max(e,m-k)=\min(m-e,k).$$
\end{proof}
\begin{corollary}\label{cor lcd}
Let $\alpha$ be a self-dual basis of $\mathbb{F}_{q^m}$ over $\mathbb{F}_q$, and $G_k(\boldsymbol{\alpha})$ be a Gabidulin code of length $m$ and dimension $1\leq k\leq m-1$. Then $G_k(\boldsymbol{\alpha})$ is LCD if and only if $e=0$.
\end{corollary}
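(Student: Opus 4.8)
The plan is to derive the corollary directly from Theorem~\ref{theorem hull1}. Recall that $G_k(\boldsymbol{\alpha})$ is LCD precisely when $\dim\big(\mathrm{Hull}_e(G_k(\boldsymbol{\alpha}))\big)=0$, so I only need to determine for which $e$ the formula in Theorem~\ref{theorem hull1} vanishes, under the standing assumption $1\le k\le m-1$.

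First I would treat the case $0\le e\le k$, where the hull dimension equals $\min(m-k,e)$. Since $k\le m-1$ we have $m-k\ge 1>0$, so $\min(m-k,e)=0$ if and only if $e=0$. Next I would treat the case $k+1\le e\le m-1$, where the hull dimension equals $\min(m-e,k)$. Here $e\le m-1$ gives $m-e\ge 1$, and $k\ge 1$ by hypothesis, so $\min(m-e,k)\ge 1>0$; hence the hull is never trivial in this range. Combining the two cases, $\dim\big(\mathrm{Hull}_e(G_k(\boldsymbol{\alpha}))\big)=0$ holds if and only if $e=0$, which is exactly the claim. (Note that $e=0$ does lie in the first range since $k\ge 1$, and there it indeed gives hull dimension $\min(m-k,0)=0$, confirming sufficiency as well as necessity.)

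There is essentially no obstacle here: the statement is a one-line consequence of the explicit formula, and the only thing to be careful about is bookkeeping of the endpoints — verifying that $m-k\ge1$ and $m-e\ge1$ under the hypotheses $1\le k\le m-1$ and $e\le m-1$, and that $e=0$ falls in the first branch of the case distinction. I would simply write out the two-case argument above, perhaps in a single displayed chain, and conclude.
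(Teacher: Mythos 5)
Your proposal is correct and follows essentially the same route as the paper: both reduce the claim to checking when the two branches of the formula in Theorem~\ref{theorem hull1} vanish, using $m-k\ge 1$ in the first branch and $m-e\ge 1$, $k\ge 1$ in the second. Your write-up is in fact slightly more explicit than the paper's in justifying why the range $k+1\le e\le m-1$ is impossible.
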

\begin{proof}
Let $G_k(\boldsymbol{\alpha})$ be a Gabidulin code of length $m$ and dimension $1\leq k\leq m-1$.
Let $e=0$. Then by Theorem \ref{theorem hull1} we get $\dim \Big(\mathrm{Hull}_e\big(G_k(\boldsymbol{\alpha})\big)\Big)=\min (m-k,e)=0$.
Hence $G_k(\boldsymbol{\alpha})$ is LCD. 

Conversely, let $G_k(\boldsymbol{\alpha})$ be LCD. Then $\dim \Big(\mathrm{Hull}_e\big(G_k(\boldsymbol{\alpha})\big)\Big)=0$. Thus, $k+1\leq e\leq m-1$ will never be the case. On the other hand, when $0\leq e\leq k$, $\min (m-k,e)=0$ implies $e=0$, since $m-k>0$.
\end{proof}

\begin{corollary}\label{cor self-orth0-one}
Let $\boldsymbol{\alpha}$ be a self-dual basis of $\mathbb{F}_{q^m}$ over $\mathbb{F}_q$, and $G_k(\boldsymbol{\alpha})$ a Gabidulin code of length $m$ and dimension $k\leq m-k$. Then $G_k(\boldsymbol{\alpha})$ is $e$-Galois self-orthogonal if and only if either $e=k$ or $k+1\leq e\leq leq m-k$.
\end{corollary}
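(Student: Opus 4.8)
The plan is to reduce everything to the explicit hull-dimension formula of Theorem \ref{theorem hull1} and then solve the equation $\dim\big(\mathrm{Hull}_e(G_k(\boldsymbol{\alpha}))\big)=k$ for $e$. Recall from Section \ref{section...} (the discussion preceding Theorem \ref{theorem hull1}) that $G_k(\boldsymbol{\alpha})$ is $e$-Galois self-orthogonal precisely when $\dim\big(\mathrm{Hull}_e(G_k(\boldsymbol{\alpha}))\big)=\dim\big(G_k(\boldsymbol{\alpha})\big)=k$. Since $\boldsymbol{\alpha}$ is a self-dual basis, Theorem \ref{theorem hull1} applies and gives the hull dimension as a piecewise formula in $e$, so it suffices to determine for which $e$ each branch equals $k$.

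First I would treat the range $0\le e\le k$, where Theorem \ref{theorem hull1} yields $\dim\big(\mathrm{Hull}_e(G_k(\boldsymbol{\alpha}))\big)=\min(m-k,e)$. Using the hypothesis $k\le m-k$, the value $\min(m-k,e)$ equals $k$ exactly when $e\ge k$; combined with $e\le k$ this forces $e=k$. (Note that $e=k$ indeed lies in the admissible range $0\le e\le m-1$ because $k\le m-k$ implies $k\le m-1$ for $m\ge 2$.) Hence within this range self-orthogonality holds if and only if $e=k$.

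Next I would treat the range $k+1\le e\le m-1$, where Theorem \ref{theorem hull1} gives $\dim\big(\mathrm{Hull}_e(G_k(\boldsymbol{\alpha}))\big)=\min(m-e,k)$. This equals $k$ if and only if $m-e\ge k$, i.e. $e\le m-k$. So within this range self-orthogonality holds if and only if $k+1\le e\le m-k$; note this interval is nonempty precisely when $k<m-k$, and when $k=m-k$ it is empty, consistently with the first case giving $e=k=m/2$. Combining the two cases gives the claimed characterization: $G_k(\boldsymbol{\alpha})$ is $e$-Galois self-orthogonal if and only if $e=k$ or $k+1\le e\le m-k$.

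There is essentially no serious obstacle here — the statement is a direct bookkeeping consequence of the hull-dimension formula — so the only thing to be careful about is keeping the two ranges of $e$ from Theorem \ref{theorem hull1} straight and correctly using the standing assumption $k\le m-k$ when evaluating the minima.
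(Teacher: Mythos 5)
Your proposal is correct and follows essentially the same route as the paper: both reduce self-orthogonality to the condition $\dim\big(\mathrm{Hull}_e(G_k(\boldsymbol{\alpha}))\big)=k$ and then evaluate the two branches of the formula in Theorem~\ref{theorem hull1}, using $k\le m-k$ to resolve the minima. Your write-up is in fact slightly cleaner in making explicit that the first branch forces $e=k$ and the second forces $k+1\le e\le m-k$.
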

\begin{proof}
 Let $G_k(\boldsymbol{\alpha})$ be $e$-Galois self-orthogonal. Then $G_k(\boldsymbol{\alpha})\subseteq G_k(\boldsymbol{\alpha})^{\perp_e}$ and hence $\dim \Big(\mathrm{Hull}_e\big(G_k(\boldsymbol{\alpha})\big)\Big)=k$. We distinguish two cases:
 
 Case (a): When $0\leq e\leq k$, we have by Theorem \ref{theorem hull1} that $\dim \Big(\mathrm{Hull}_e\big(G_k(\boldsymbol{\alpha})\big)\Big)=\min (m-k,e)=e=k$, since $e\leq k\leq m-k$. 
 
 Case (b): When $k+1\leq e\leq m-1$, we have by Theorem \ref{theorem hull1} that $\dim \Big(\mathrm{Hull}_e\big(G_k(\boldsymbol{\alpha})\big)\Big)=\min (m-e,k)=k$ which implies $m\geq k+e$. 
 
Conversely, for (a), let $e=k\leq \frac{m}{2}$. Now, Theorem \ref{theorem hull1} gives $\dim \Big(\mathrm{Hull}_e\big(G_k(\boldsymbol{\alpha})\big)\Big)=\min (m-k,e)=e=k$. Therefore, $G_k(\boldsymbol{\alpha})$ is $e$-Galois self-orthogonal.

In case (b), let $k+1\leq e\leq m-k$. Then Theorem \ref{theorem hull1} gives $\dim \Big(\mathrm{Hull}_e\big(G_k(\boldsymbol{\alpha})\big)\Big)=\min (m-e,k)=k$. Thus, $G_k(\boldsymbol{\alpha})$ is $e$-Galois self-orthogonal.
\end{proof}

\begin{corollary}\label{cor herr}
Let $m$ be even. Let $\alpha$ be a self-dual basis of $\mathbb{F}_{q^m}$ over $\mathbb{F}_q$, and $G_k(\alpha)$ be a Gabidulin code of length $m$, dimension $k=\frac{m}{2}$. Then $G_k(\alpha)$ is $e$-Galois self-dual if and only if $e=\frac{m}{2}$.
\end{corollary}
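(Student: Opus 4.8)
The plan is to reduce self-duality to self-orthogonality using the dimension count, and then invoke Theorem~\ref{theorem hull1} (or equivalently Corollary~\ref{cor self-orth0-one}) with the specific value $k=\frac{m}{2}$. First I would observe that $G_k(\boldsymbol{\alpha})$ has dimension $k=\frac{m}{2}$, while $G_k(\boldsymbol{\alpha})^{\perp_e}=G_{m-k}(\boldsymbol{\beta}^{q^{k-e}})$ has dimension $m-k=\frac{m}{2}$; hence both spaces have the same dimension, so $G_k(\boldsymbol{\alpha})=G_k(\boldsymbol{\alpha})^{\perp_e}$ holds if and only if $G_k(\boldsymbol{\alpha})\subseteq G_k(\boldsymbol{\alpha})^{\perp_e}$, i.e.\ if and only if $G_k(\boldsymbol{\alpha})$ is $e$-Galois self-orthogonal, i.e.\ if and only if $\dim\big(\mathrm{Hull}_e(G_k(\boldsymbol{\alpha}))\big)=k$.

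Next I would evaluate the hull dimension from Theorem~\ref{theorem hull1} at $k=\frac{m}{2}$, splitting into the two ranges. For $0\leq e\leq k$ we get $\dim\big(\mathrm{Hull}_e(G_k(\boldsymbol{\alpha}))\big)=\min(m-k,e)=\min\!\big(\tfrac{m}{2},e\big)=e$, since $e\leq k=\tfrac{m}{2}$; this equals $k$ exactly when $e=k=\frac{m}{2}$. For $k+1\leq e\leq m-1$ we get $\dim\big(\mathrm{Hull}_e(G_k(\boldsymbol{\alpha}))\big)=\min(m-e,k)$, and requiring this to equal $k$ forces $m-e\geq k=\frac{m}{2}$, i.e.\ $e\leq\frac{m}{2}=k$, contradicting $e\geq k+1$; so this range yields no solution. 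Combining the two cases, $\dim\big(\mathrm{Hull}_e(G_k(\boldsymbol{\alpha}))\big)=k$ if and only if $e=\frac{m}{2}$, which by the reduction in the first paragraph is exactly the claimed characterization of $e$-Galois self-duality. (Alternatively, since $k=m-k$ here, Corollary~\ref{cor self-orth0-one} already gives that self-orthogonality holds iff $e=k$ or $k+1\leq e\leq m-k$, and the latter interval is empty, leaving only $e=k=\frac{m}{2}$.)

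I do not expect any serious obstacle: the only thing to be careful about is the boundary case $e=k=\frac{m}{2}$ itself, where one must check that $\dim\big(\mathrm{Hull}_e(G_k(\boldsymbol{\alpha}))\big)=\min(m-k,e)=\min\!\big(\tfrac{m}{2},\tfrac{m}{2}\big)=\tfrac{m}{2}=k$, so that the code is indeed self-orthogonal and hence, by the equal-dimension argument, self-dual. One should also note at the outset that a self-dual basis of $\mathbb{F}_{q^m}$ over $\mathbb{F}_q$ exists in the relevant cases by Lemma~\ref{lemma exist}, so the hypothesis of the corollary is not vacuous.
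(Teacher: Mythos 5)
Your proposal is correct and follows essentially the same route as the paper, which simply invokes Corollary~\ref{cor self-orth0-one} with $e=k=\frac{m}{2}$ (your parenthetical alternative is exactly the paper's proof, and your expanded case analysis via Theorem~\ref{theorem hull1} is just that corollary's proof unwound). The dimension-count reduction from self-duality to self-orthogonality and the emptiness of the range $k+1\leq e\leq m-k$ when $k=\frac{m}{2}$ are both implicit in the paper's one-line argument.
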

\begin{proof}
It follows from Corollary \ref{cor self-orth0-one} with $e=k=\frac{m}{2}$.
\end{proof}
Before we give an example we need to recall a known fact.
\begin{lemma}\cite{Carlet18,Massey92}\label{lemm known}
Let $C$ be a linear code of length $m$ over $\mathbb{F}_q$ with the generator matrix $G$. 
\begin{enumerate}
    \item $C$ is Euclidean LCD if and only if $GG^\top$ is non-singular.
    \item $C$ is Hermitian self-dual if and only if $m$ is even and $G\overline{G}^\top$ is the zero matrix, where $\overline{G}:=G^{q^{m/2}}$.
\end{enumerate}
\end{lemma}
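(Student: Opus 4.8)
The plan is to translate each code-theoretic condition into a rank/nullity condition on $G$ using two elementary facts: the rows of $G$ span $C$, and $\boldsymbol{x}\in C^{\perp_e}$ if and only if $\boldsymbol{x}\cdot_e \boldsymbol{r}=0$ for every row $\boldsymbol{r}$ of $G$, i.e. $\boldsymbol{x}\,(G^{q^e})^\top=\boldsymbol{0}$. (That the second equivalence holds is because $\cdot_e$ is biadditive, $(a+b)^{q^e}=a^{q^e}+b^{q^e}$, and $c^{q^e}=0\iff c=0$, so $e$-orthogonality to a spanning set is $e$-orthogonality to $C$.) The matrix entering the criteria is the $e$-Galois Gram matrix $G\cdot_e G^\top$, whose $(i,j)$ entry is the $e$-Galois product of rows $i$ and $j$ of $G$; for $e=0$ this is $GG^\top$, and for $e=m/2$ it is $G\overline{G}^\top$ with $\overline{G}=G^{q^{m/2}}$. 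I will also use that, since $\cdot_e$ is non-degenerate, $\dim C^{\perp_e}=m-\dim C$ (which is precisely what the explicit generator matrix $G^{\perp_e}$ of Section 2 exhibits, $\mathrm{rk}(G^{q^e})=\mathrm{rk}(G)$).

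For part (1): since $\dim C+\dim C^{\perp}=m$, being LCD is equivalent to $C\cap C^{\perp}=\{\boldsymbol{0}\}$. If $GG^\top$ is singular, choose $\boldsymbol{u}\neq\boldsymbol{0}$ with $\boldsymbol{u}GG^\top=\boldsymbol{0}$ and put $\boldsymbol{x}=\boldsymbol{u}G$; then $\boldsymbol{x}\neq\boldsymbol{0}$ because $G$ has full row rank, $\boldsymbol{x}\in C$ by construction, and $\boldsymbol{x}G^\top=\boldsymbol{0}$ says $\boldsymbol{x}$ is Euclidean-orthogonal to every row of $G$, so $\boldsymbol{x}\in C^{\perp}$; hence $C$ is not LCD. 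Conversely, if $\boldsymbol{0}\neq\boldsymbol{x}\in C\cap C^{\perp}$, write $\boldsymbol{x}=\boldsymbol{u}G$ with $\boldsymbol{u}\neq\boldsymbol{0}$; then $\boldsymbol{x}\in C^{\perp}$ gives $\boldsymbol{x}G^\top=\boldsymbol{u}GG^\top=\boldsymbol{0}$, so $GG^\top$ is singular. The only point to watch is that $G$, being a generator matrix, has full row rank, so $\boldsymbol{u}G=\boldsymbol{0}$ forces $\boldsymbol{u}=\boldsymbol{0}$.

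For part (2): first record that $C$ is $e$-Galois self-orthogonal ($C\subseteq C^{\perp_e}$) if and only if every pair of rows of $G$ is $e$-Galois orthogonal, i.e. $G\cdot_e G^\top=\boldsymbol{0}$; for $e=m/2$ this is $G\overline{G}^\top=\boldsymbol{0}$. For the forward implication, if $C$ is Hermitian self-dual then $C=C^{\perp_{m/2}}$, so $\dim C=m-\dim C$, forcing $m$ even and $\dim C=m/2$, while $C\subseteq C^{\perp_{m/2}}$ gives $G\overline{G}^\top=\boldsymbol{0}$. For the converse, $G\overline{G}^\top=\boldsymbol{0}$ gives $C\subseteq C^{\perp_{m/2}}$, and then $\dim C=m/2=m-\dim C=\dim C^{\perp_{m/2}}$ upgrades the inclusion to equality — here is where one uses that $G$ is understood to be the $\tfrac{m}{2}\times m$ generator matrix of $C$. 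I would also remark, so that the form genuinely deserves the name ``Hermitian'', that $(a^{q^{m/2}})^{q^{m/2}}=a^{q^{m}}=a$, so the conjugation is an involution and $\cdot_{m/2}$ is conjugate-symmetric.

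I do not expect a real obstacle here, since this is an elementary (and known) fact; the two things to get right are (i) the bookkeeping of which argument of $\cdot_e$ carries the Frobenius twist, so that ``$\boldsymbol{x}\in C^{\perp_e}$'' becomes ``$\boldsymbol{x}(G^{q^e})^\top=\boldsymbol{0}$'' and not the untwisted or transposed version, and (ii) the dimension bookkeeping in the converse of (2): $G\overline{G}^\top=\boldsymbol{0}$ by itself yields only self-orthogonality, and one needs $\dim C=m/2$ (equivalently, that $G$ has $m/2$ rows) to conclude self-duality. Everything else is the full-row-rank argument, applied twice.
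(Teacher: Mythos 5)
The paper does not prove this lemma at all: it is quoted as a known result with citations to Carlet et al.\ and Massey, so there is no in-paper argument to compare yours against. Your proof is the standard one and is correct: for (1) the full-row-rank argument in both directions (a nonzero left kernel vector $\boldsymbol{u}$ of $GG^\top$ produces $\boldsymbol{0}\neq\boldsymbol{u}G\in C\cap C^{\perp}$, and conversely) is exactly Massey's original proof; for (2) the reduction of self-orthogonality to vanishing of the twisted Gram matrix $G(G^{q^{m/2}})^\top$ plus the dimension count $\dim C=m-\dim C$ is the standard route. Your bookkeeping of the Frobenius twist ($\boldsymbol{x}\in C^{\perp_e}\iff \boldsymbol{x}(G^{q^e})^\top=\boldsymbol{0}$, using semilinearity in the second argument) is right, and consistent with how the paper uses the lemma in its Example (where it checks $GG^\top=I_2$ and $G\overline{G}^\top=O_{2\times 2}$ for a $2\times 4$ generator matrix over $\mathbb{F}_{2^4}$).

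One observation you make deserves emphasis, because it is a genuine (if minor) imprecision in the lemma as stated: the condition ``$m$ even and $G\overline{G}^\top=0$'' by itself only yields self-orthogonality, and the ``if'' direction of (2) additionally needs $\dim C=m/2$, i.e.\ that $G$ has $m/2$ rows --- otherwise a one-dimensional self-orthogonal code of even length would be a counterexample. You correctly flag this as an implicit hypothesis rather than sweeping it under the rug; in the paper's only use of the lemma (Corollary \ref{cor herr} and the example with $k=m/2=2$) that hypothesis is satisfied, so nothing downstream is affected.
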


\begin{example}
Let $q=2, m=4$ and $\omega$ be a primitive element of $\mathbb{F}_{2^4}$, where $\omega^4=\omega+1$. Then $\{1,\omega,\omega^2,\omega^3\}=\{\beta_1,\beta_2,\beta_3,\beta_4\}$ is a basis of $\mathbb{F}_{2^4}$ over $\mathbb{F}_2$ . The matrix $M=(M_{ij})_{4\times 4}$ where $Tr(\beta_i\beta_j)=M_{ij}$ is given by 
$$M=\begin{pmatrix}
0&0&0&1\\
0&0&1&0\\
0&1&0&0\\
1&0&0&1
\end{pmatrix}.$$
Then we have $EME^\top=I_4$, where 
$$E=\begin{pmatrix}
1&0&0&1\\
0&1&0&1\\
0&0&1&1\\
0&1&1&1
\end{pmatrix}.$$
Therefore, the self-dual basis $\boldsymbol{\alpha}=\{\alpha_1,\alpha_2,\alpha_3,\alpha_4\}=\{1+\omega^3, \omega+\omega^3,\omega^2+\omega^3, \omega+\omega^2+\omega^3\}$ is calculated by $\alpha_i=\sum\limits_{j=1}^4 E_{ij}\beta_j$. The Gabidulin code $G_2(\boldsymbol{\alpha})$ based on the self-dual basis $\boldsymbol{\alpha}$ is defined by the generator matrix 
$$G=\begin{pmatrix}
1+\omega^3 & \omega+\omega^3 & \omega^2+\omega^3 &\omega+\omega^2+\omega^3 \\
1+\omega^6 & \omega^2+\omega^6 & \omega^4+\omega^6 &\omega^2+\omega^4+\omega^6
\end{pmatrix}=\begin{pmatrix}
\boldsymbol{\alpha} \\
\boldsymbol{\alpha}^2
\end{pmatrix}.$$ 
Now, we verify our results on the $e$-Galois hull for $e=0,1,2,3$.
\begin{enumerate}
    \item Let $e=0$. Then $GG^\top=\begin{pmatrix}
1&0\\
0&1
\end{pmatrix}$, a non-singular matrix. Thus, by Lemma \ref{lemm known} $G_2(\boldsymbol{\alpha})$ is an Euclidean LCD code, which verifies Corollary \ref{cor lcd}.
\item Let $e=1$. Then 
$G^{\perp_{1}}=\begin{pmatrix}
\boldsymbol{\alpha}^2 \\
\boldsymbol{\alpha}^{2^2}
\end{pmatrix}=\begin{pmatrix}
\boldsymbol{\alpha}^2 \\
\boldsymbol{\alpha}^{4}
\end{pmatrix}$, and hence 
$$\mathrm{rk} \begin{pmatrix} G \\\hline G^{\perp_1} \end{pmatrix}=\mathrm{rk}\begin{pmatrix}
\boldsymbol{\alpha} \\
\boldsymbol{\alpha}^2 \\
\hline
\boldsymbol{\alpha}^2 \\
\boldsymbol{\alpha}^{4}
\end{pmatrix}=3,$$ 
i.e., $\dim\Big(\mathrm{Hull}_1\big(G_k(\boldsymbol{\alpha})\big)\Big)=4-3=1=\min(m-k,e)$, which verifies Theorem \ref{theorem hull1}.

\item Let $e=2$. Then we have
\begin{align*}
    &\overline{G}=G^{2^2}=G^4\\
    &=\begin{pmatrix}
1+\omega^{12} & \omega^4+\omega^{12} & \omega^8+\omega^{12} &\omega^4+\omega^8+\omega^{12} \\
1+\omega^{24} & \omega^{8}+\omega^{24} & \omega^{16}+\omega^{24} &\omega^8+\omega^{16}+\omega^{24}
\end{pmatrix}
\end{align*}
and $G\overline{G}^\top=O_{2\times 2}.$ Therefore, by Lemma \ref{lemm known} $G_2(\boldsymbol{\alpha})$ is Hermitian self-dual, which verifies Corollary \ref{cor herr}.
\item Let $e=3$. Then $G^{\perp_{3}}=\begin{pmatrix}
\boldsymbol{\alpha}^{2^3} \\
\boldsymbol{\alpha}^{2^0}
\end{pmatrix}=\begin{pmatrix}
\boldsymbol{\alpha}^8 \\
\boldsymbol{\alpha}
\end{pmatrix}$
and hence 
$$\mathrm{rk} \begin{pmatrix} G \\\hline G^{\perp_3} \end{pmatrix}=\mathrm{rk}\begin{pmatrix}
\boldsymbol{\alpha} \\
\boldsymbol{\alpha}^2 \\
\hline
\boldsymbol{\alpha}^8 \\
\boldsymbol{\alpha}
\end{pmatrix}=3, $$ 
i.e., $\dim\Big(\mathrm{Hull}_3\big(G_k(\boldsymbol{\alpha})\big)\Big)=4-3=1=\min(m-e,k)$, which verifies Theorem \ref{theorem hull1}.
\end{enumerate}
\end{example}

To conclude this section we want to set our results into the context of known results and open problems in the literature. 
Recently, in \cite{Cao22,Cao20,Fang22}, the $e$-Galois hulls of generalized Reed-Solomon (GRS) codes were determined, whose dimensions are always upper bounded by $\lfloor \frac{p^e+n}{p^e+1}\rfloor$. As per our knowledge, it is still an open problem to find MDS codes of larger dimensions whose corresponding $e$-Galois hull dimensions are known.
\begin{problem}\cite{Cao22,Cao20,Fang22}\label{prob}
 Let $p$ be the characteristic of $\mathbb{F}_q$. 
 Can we construct MDS codes in $\mathbb{F}_q^n$ of dimension $k$, such that $\lfloor \frac{p^e+n}{p^e+1}\rfloor <k\leq \lfloor \frac{n}{2}\rfloor$, of which we can determine the dimensions of their $e$-Galois hulls?
\end{problem}
With our result we give a solution to Problem \ref{prob}, since Theorem \ref{theorem hull1} is valid for any $0\leq k\leq m$. To be precise we need $q$ and $m$ be such that a self-dual basis exists, but on the other hand our codes' dimensions are not upper bounded.
\vspace{0.2cm}
\\
\textbf{Solution to Problem \ref{prob}.} Let $q$ be even or both $q$ and $m$ be odd, and let $\alpha$ be a self-dual basis of $\mathbb F_{q^m}$ over $\mathbb{F}_q$. The Gabidulin code $G_k(\boldsymbol{\alpha})$ is an MDS code of length $m$ and dimension $k$ of which the dimensions of all $e$-Galois hulls are determined by Theorem \ref{theorem hull1}.

\vspace{0.1cm}

\begin{example}
Let $q^m=3^{15}$ and $m=15$. Following the existence result of self-dual bases given by Lemma \ref{lemma exist}, let $\boldsymbol{\alpha}$ be a self-dual basis of $\mathbb{F}_{3^{15}}$ over $\mathbb{F}_3$, and $G_k(\boldsymbol{\alpha})$ a Gabidulin code of length $15$ and arbitrary dimension $1\leq k\leq 15$. For $e\geq 3$, we get $\lfloor \frac{p^e+m}{p^e+1}\rfloor=1$. Hence, for $1<k\leq 15$, $G_k(\boldsymbol{\alpha})$ is an MDS code of dimension $k>\lfloor \frac{p^e+m}{p^e+1}\rfloor$ whose $e$-Galois hull dimensions are completely determined by Theorem \ref{theorem hull1}. Similarly, for $e=2$ (respectively $e=1$) we get previously unknown code parameters for $k>2$ (respectively $k>4$).
\end{example}

\section{Applications to EAQECCs}
As one possible application of Gabidulin codes (MDS by parameters), we now focus on constructing entangled-assisted quantum error-correcting codes (abbreviated as EAQECCs). Recall that an EAQECC $[[n,k,d;c]]_{q^m}$ refers to a $q^m$-ary quantum code (discovered by Shor \cite{Shor95} in 1995) of length $n$, dimension $k$ and minimum distance $d$ which pre-shares (between sender and receiver) $c$ maximally entangled states. In particular, for $c=0$, it is simply a standard quantum code. As it is known, a quantum code $[[n,k,d]]_{q^m}$ satisfies the Quantum Singleton Bound $2d\leq n-k+2$, and is known as a quantum MDS code when attains it with equality. Similarly for an EAQECC the Singleton Bound was established in \cite{Brun14} as 
\begin{align}\label{bound}
    2d\leq n-k+2+c.
\end{align}
Further, \cite{Lai18} validated the bound (\ref{bound}) when $d\leq \frac{n}{2}+1$, where $n$ is the code length. However, recently, Grassl \cite{Grassl21} computed a few examples which violate the above bounds for a certain range. Therefore, when EAQECCs attain the bound (\ref{bound}), it would be more appropriate calling them EAQECCs having relatively large minimum distance than MDS EAQECCs (although many researchers still prefer to use MDS EAQECCs). 

EAQECCs can be constructed from classical linear codes  based on the parity check matrix (see, \cite{Guenda20,Liu19}) and by using the $e$-Galois hull dimensions, see \cite{Cao22,Cao20,Fang22}.

\begin{lemma}\cite[Corollary IV.1]{Cao22}\label{lemma EAQECC}
For a classical MDS code $C$ of parameters $[n,k,d]_{q^m}$, there exists an EAQECC of parameters $[[n,k-\dim \big(\mathrm{Hull_e}(C)\big),d; n-k-\dim \big(\mathrm{Hull_e}(C)\big)]]_{q^m}$.
\end{lemma}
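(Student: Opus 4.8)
The result is \cite[Corollary IV.1]{Cao22}, so in practice one simply invokes it; nevertheless, here is the route I would follow to prove it directly. The plan is to run the standard reduction of entangled-assisted quantum codes to classical codes. Let $H$ be a parity-check matrix of $C$, that is, a full-rank $(n-k)\times n$ generator matrix of $C^{\perp_e}$, and let $\bar{H}:=H^{q^e}$ denote the coordinate-wise $q^e$-th power. The number of maximally entangled pairs consumed by the construction is $c:=\mathrm{rk}\big(H\bar{H}^{\top}\big)$, and essentially everything reduces to (i) computing this rank in terms of $\dim\mathrm{Hull}_e(C)$ and (ii) quoting the classical-to-EA machinery of \cite{Brun14,Guenda20,Cao22}.

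For (i), I would show $\mathrm{rk}(H\bar{H}^{\top})=(n-k)-\dim\mathrm{Hull}_e(C)$. A row vector $v\in\mathbb{F}_{q^m}^{n-k}$ lies in the left kernel of $H\bar{H}^{\top}$ precisely when $vH$ is $e$-Galois orthogonal to every row of $H$, i.e. when $vH\in{}^{\perp_e}(C^{\perp_e})$; since $vH$ always lies in $C^{\perp_e}$, this forces $vH\in C^{\perp_e}\cap{}^{\perp_e}(C^{\perp_e})$. Using the duality identity ${}^{\perp_e}(C^{\perp_e})=C$ of \cite{Fan17}, this space equals $C^{\perp_e}\cap C=\mathrm{Hull}_e(C)$. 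As $v\mapsto vH$ is injective, $\dim\ker(H\bar{H}^{\top})=\dim\mathrm{Hull}_e(C)$, and the rank formula follows.

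For (ii), I would feed $H$ and $c$ into the classical-to-EAQECC construction (the $\mathbb{F}_{q^m}$- and $e$-Galois version of Brun--Devetak--Hsieh, as in \cite{Guenda20,Cao22}): from an $[n,k,d]_{q^m}$ code one obtains an $[[n,\,2k-n+c,\,\geq d;\,c]]_{q^m}$ EAQECC. Substituting $c=(n-k)-\dim\mathrm{Hull}_e(C)$ gives quantum dimension $2k-n+(n-k)-\dim\mathrm{Hull}_e(C)=k-\dim\mathrm{Hull}_e(C)$ and entanglement parameter $n-k-\dim\mathrm{Hull}_e(C)$, which is precisely the claimed parameter set. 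The MDS hypothesis $d=n-k+1$ is then used to pin the minimum distance of the resulting EAQECC to $d$ and, as a by-product, to see that the Singleton-type bound \eqref{bound} is met with equality.

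The delicate point is the rank computation in (i): the $e$-Galois inner product is not symmetric for general $e$, so one must carefully distinguish the left and right $e$-duals of both $C$ and $C^{\perp_e}$ and invoke the correct identity ${}^{\perp_e}(C^{\perp_e})=C$ rather than a twisted variant. This asymmetry, together with checking that the classical-to-quantum passage remains valid for all $0\leq e\leq m-1$ and not only $e=0$ or $e=m/2$, is really the technical heart of \cite{Cao22}; once it is settled, the remaining steps are a one-line rank count and a citation.
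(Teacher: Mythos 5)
The paper offers no proof of this lemma at all --- it is quoted directly from \cite[Corollary IV.1]{Cao22} --- so there is nothing internal to compare against; your sketch follows the standard architecture of that reference (compute $c=\mathrm{rk}\big(H\bar{H}^{\top}\big)$, then apply the classical-to-EAQECC machinery of \cite{Brun14,Guenda20}), and the final parameter count $2k-n+c=k-\dim\mathrm{Hull}_e(C)$ is arithmetically correct.

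However, the step you yourself identify as the delicate one is resolved the wrong way around. With the paper's convention $D^{\perp_e}=\{x: x\cdot_e y=0 \text{ for all } y\in D\}$, one has $D^{\perp_e}=(D^{q^e})^{\perp_0}$, hence $(C^{\perp_e})^{\perp_e}=C^{q^{2e}}$, which need not equal $C$ unless $2e\equiv 0\pmod m$. The correct double-dual identity is precisely the ``twisted variant'' you dismiss, namely $(C^{\perp_e})^{\perp_{m-e}}=C$. Consequently your left-kernel computation shows $\{v: vH\bar{H}^{\top}=0\}\cong C^{\perp_e}\cap C^{q^{2e}}$, not $C^{\perp_e}\cap C=\mathrm{Hull}_e(C)$, and as written the rank formula is not established. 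The argument can be repaired in two ways: (a) compute the \emph{right} kernel instead --- $H\bar{H}^{\top}w^{\top}=0$ unwinds to $w^{q^{m-e}}H\in C\cap C^{\perp_e}$, so the right nullity is exactly $\dim\mathrm{Hull}_e(C)$, and since left and right nullities of a square matrix coincide one gets $\mathrm{rk}(H\bar{H}^{\top})=(n-k)-\dim\mathrm{Hull}_e(C)$; or (b) observe that $\dim\big(C^{\perp_e}\cap C^{q^{2e}}\big)=\dim\mathrm{Hull}_{m-e}(C)$ and invoke the (true, but not free) identity $\dim\mathrm{Hull}_e(C)=\dim\mathrm{Hull}_{m-e}(C)$, which follows from $\mathrm{rk}\big(G(G^{q^e})^{\top}\big)=\mathrm{rk}\big(G^{q^e}G^{\top}\big)=\mathrm{rk}\big(G(G^{q^{m-e}})^{\top}\big)$. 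Until one of these repairs is made, the key rank identity --- and with it the lemma --- rests on a false duality statement.
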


In the light of Lemma \ref{lemma EAQECC} and Theorem \ref{theorem hull1} we construct EAQECCs with relatively large minimum distance in the next results.
\begin{theorem}\label{theorem EAQ1}
Let $\alpha$ be a self-dual basis of $\mathbb{F}_{q^m}$ over $\mathbb{F}_q$, and $G_k(\alpha)$, a Gabidulin code of length $m$.
\begin{enumerate}
    \item For $0\leq e\leq k$ there exists an $[[m,k-\ell,m-k+1;m-k-\ell]]_{q^m}$ EAQECC, where $\ell=\min(m-k,e)$.
    \item For $k+1\leq e\leq m-1$ there exists an $[[m,k-\ell,m-k+1;m-k-\ell]]_{q^m}$ EAQECC, where $\ell=\min(m-e,k)$.
\end{enumerate}
\end{theorem}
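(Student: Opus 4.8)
The plan is to combine Lemma~\ref{lemma EAQECC} with Theorem~\ref{theorem hull1} directly, since the Gabidulin code $G_k(\boldsymbol{\alpha})$ is MDS with parameters $[m,k,m-k+1]_{q^m}$. First I would record that $C = G_k(\boldsymbol{\alpha})$ has length $n=m$, dimension $k$, and minimum Hamming distance $d=m-k+1$, as recalled in the Preliminaries. Then, because $\boldsymbol{\alpha}$ is a self-dual basis, Theorem~\ref{theorem hull1} tells us exactly the value of $\dim\big(\mathrm{Hull}_e(C)\big)$: it equals $\ell=\min(m-k,e)$ when $0\le e\le k$, and $\ell=\min(m-e,k)$ when $k+1\le e\le m-1$.

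Next I would simply substitute into the parameter formula of Lemma~\ref{lemma EAQECC}, which yields an EAQECC with parameters
$$[[\,n,\ k-\dim(\mathrm{Hull}_e(C)),\ d;\ n-k-\dim(\mathrm{Hull}_e(C))\,]]_{q^m} = [[\,m,\ k-\ell,\ m-k+1;\ m-k-\ell\,]]_{q^m}.$$
Handling case (1) and case (2) separately, but identically in form, with the appropriate value of $\ell$ plugged in, completes the argument. In each case one should also note that $\ell \le \min(m-k,k)$ so that both the dimension $k-\ell$ and the entanglement parameter $m-k-\ell$ are nonnegative, which is needed for the EAQECC to be well-defined; this is immediate from the formulas for $\ell$ since $\ell\le m-k$ always and $\ell\le e\le k$ (case 1) or $\ell\le k$ (case 2).

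Honestly, there is no real obstacle here: the theorem is a bookkeeping corollary of the two cited/proved results, and the only thing to be careful about is making sure the case split on $e$ matches the one in Theorem~\ref{theorem hull1} and that the substituted parameters are consistent (in particular that $c=n-k-\ell$ is what Lemma~\ref{lemma EAQECC} produces, not $n-k+\ell$ or similar). If anything warrants a sentence of comment, it is to remark that since $\ell$ can be as large as $\min(m-k,k)$ but the dimension $k$ itself is unrestricted (unlike the GRS constructions of \cite{Cao22,Cao20,Fang22}), these EAQECCs genuinely fall outside the previously known families; but that discussion belongs after the proof rather than inside it. So the proof itself is three or four lines: invoke the MDS parameters, invoke Theorem~\ref{theorem hull1} for $\dim(\mathrm{Hull}_e)$, invoke Lemma~\ref{lemma EAQECC}, and read off the result in each of the two ranges of $e$.
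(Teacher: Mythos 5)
Your proposal is correct and matches the paper's approach exactly: the paper offers no separate proof for this theorem beyond the remark that it follows ``in the light of Lemma~\ref{lemma EAQECC} and Theorem~\ref{theorem hull1},'' which is precisely the substitution you carry out. The extra sanity check that $k-\ell$ and $m-k-\ell$ are nonnegative is a small, harmless addition beyond what the paper states.
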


It is worth mentioning that \cite{Cao22,Cao20,Fang22} also constructed  EAQECCs using the $e$-Galois hull dimension, but their codes have dimension $k$ upper bounded by $\lfloor \frac{p^e+m}{p^e+1}\rfloor$ or $\lfloor \frac{p^e+m-1}{p^e+1}\rfloor$, where $q=p^h$ and $2e\mid h$. Since in our case, dimension $k$ has no such restriction, the codes parameters in Theorem \ref{theorem EAQ1} can exceed this bound. Therefore, we have more flexible parameters for EAQECCs. Some explicit parameters for EAQECCs with relatively large minimum distances are listed in Table \ref{tab1}, where $k>\lfloor \frac{p^e+m}{p^e+1}\rfloor$.

\begin{table}
\renewcommand{\arraystretch}{1.5}
\begin{center}
\caption{Some EAQECCs $[[m,k,d;c]]_q$ satisfying $k>\lfloor \frac{p^e+m}{p^e+1}\rfloor$}
\vspace{0.5cm}

\begin{tabular}{|c|c|c|c|c|c|c|c|c|}

\hline
$(q=p^m,e)$ & EAQECC &  $k$ & $\lfloor \frac{p^e+m}{p^e+1}\rfloor$    \\
\hline

$(2^{100},2)$ & $[[100,k-2,101-k;98-k]]_{2^{100}}$ & $21\leq k\leq 98$ & $20$    \\

$(2^{100},2)$ & $[[100,2k-100,101-k;0]]_{2^{100}}$ & $ k\geq 98$ & $20$     \\

$(3^{67},40)$ & $[[67,0,68-k;67-2k]]_{3^{67}}$ & $2\leq k\leq 27$ & $1$    \\
$(3^{67},40)$ & $[[67,k-27,68-k;40-k]]_{3^{67}}$ & $27\leq k\leq 39$ & $1$     \\

\hline
\end{tabular}\label{tab1}
\end{center}
\end{table}

\section{Conclusion}
We investigated the $e$-Galois hulls of Gabidulin codes constructed from self-dual bases of $\mathbb{F}_{q^m}$ over $\mathbb{F}_q$. The dimensions of their $e$-Galois hull can be determined with a very easy formula, for any $0\leq e \leq m-1$. Since self-dual bases exist if and only if $q$ is even, or both $q$ and $m$ are odd, we have hence established a large family of MDS codes for which all $e$-Galois hull dimensions are known. This had previously been posed as an open problem for MDS codes of (relatively) large dimension. Furthermore, we can construct new entangled-assisted quantum error-correcting codes from these Gabidulin codes. 

It is interesting to see that the relation of a code to its $e$-Galois dual code depends heavily on the parameter $e$. In particular, we showed that Gabidulin codes generated by a self-dual (finite extension field) basis are always self-dual with respect to the Hermitian inner product, i.e., for $e=\frac{m}{2}$. Note that for the existence of a self-dual basis we need $q$ to be even in this case. In contrast, for the Euclidean inner product, i.e. $e=0$, it was shown in \cite{Nebe16} 
that self-dual Gabidulin codes (or any other maximum rank distance (MRD) codes) do not exist for even $q$. 

In this paper, we focused on Gabidulin codes $G_k(\boldsymbol{\alpha})$ generated by a self-dual basis vector $\boldsymbol{\alpha}$. In future work we will investigate the hull dimension of $G_k(\boldsymbol{\alpha})$ for other, non-self-dual, bases $\boldsymbol{\alpha}$ of $\mathbb{F}_{q^m}$ over $\mathbb{F}_q$. Moreover, we will study other properties of dual codes with respect to the $e$-Galois inner product, as e.g. MacWilliams identities.

\end{document}